\newtheorem{theorem}{Theorem}
\newcommand {\mP}{\mathcal{P}}
\def\eps{\varepsilon}
\def\bR{\mathbb{R}}
\newcommand {\ds}{\displaystyle}
\begin{document}
\begin{center}
\large{\textbf{Separation of variables  \\
in one partial integrable case of Goryachev}}
\end{center}
\begin{center}
P.E. Ryabov
\end{center}
\begin{flushright}
19.12.2010
\end{flushright}
\begin{abstract}
We show that the equations of motion in one
partial integrable case of Goryachev in the
rigid body dynamics can be separated by the
appropriate change of variables, the new
variables $x, y$ being hyperelliptic functions
of time. The natural phase variables
(components of coordinates and momenta) are
expressed via $x,y$ explicitly in elementary
algebraic functions.

%\noindent{\copyright 2008 Elsevier Ltd. All rights reserved.}
\end{abstract}

%\small

\section {Introduction}
The equations of motion of a rigid body about a fixed point in the integrable case found by D.N. Goryachev have the form
\begin{equation}\label{en1_1}
\begin{array}{ll}
\ds{\dot
s_1=s_2s_3+cr_2r_3-b\frac{r_2}{r_3^3},}&\ds{\dot r_1=2s_3r_2-s_2r_3,}\\
\ds{\dot s_2=-s_1s_3+cr_1r_3+b\frac{r_1}{r_3^2},}&\ds{\dot r_2=-2s_3r_1+s_1r_3,}\\
\ds{\dot s_3=-2cr_1r_2,}&\ds{\dot
r_3=s_2r_1-s_1r_2.}
\end{array}
\end{equation}
This system can be written in the Hamiltonian
form on the space ${\Bbb R}^6({\boldsymbol s},
{\boldsymbol r})$ with the Poisson brackets
\begin{equation*}
\begin{array}{c}
 \{s_i,s_j\}=-\eps_ {i j k} s_k,
\quad \{s_i,r_j\}=-\eps _ {i j k} r_k, \quad
\{r_i,r_j\}=0,\\[3mm]
 \quad 1\leqslant
i,j,k\leqslant 3,\quad
\eps=\frac{1}{2}(i-j)(j-k)(k-i)
\end{array}
\end{equation*}
and the Hamilton function
\begin{equation}\label{en1_2}
\ds{H=\frac{1}{2}(s_1^2+s_2^2+2s_3^2)+\frac{1}{2}[c(r_1^2-r_2^2)+\frac{b}{r_3^2}].}
\end{equation}

Here ${\boldsymbol s}$ is the kinetic moment of
the body,   ${\boldsymbol r}$ is identified
with the unit vector. The force depends on two
arbitrary parameters $c$ and $b$.

The geometrical integral and the area integral of the system~(\ref{en1_1})
\begin{eqnarray}\label{en1_3}
 \ds{\Gamma=r_1^2+r_2^2+r_3^2,\quad
 L=s_1r_1+s_2r_2+s_3r_3}
\end{eqnarray}
are the Casimir functions. Hence the vector field~(\ref{en1_1}) restricted to the 4-manifold
\begin{gather*}
\ds{\mP^4=\{({\boldsymbol s},{\boldsymbol
r})\in {\bR}^6: \Gamma=1,L=0\},}
\end{gather*}
is the Hamiltonian system with two degrees of freedom. To be Liouville integrable it needs, in addition to the Hamiltonian $H$, one more independent integral.

Consider the functions
\begin{equation*}\label{en1_4}
F=\left(s_1^2-s_2^2+cr_3^2-\frac{b(r_1^2-r_2^2)}{r_3^2}\right)^2+
4\left(s_1s_2-\frac{br_1r_2}{r_3^2}\right)^2\quad
\text{(D.\,N.~Goryachev \cite{bib1})}
\end{equation*}
and
\begin{equation}\label{en1_5}
\ds{K=\left(s_1^2+s_2^2+\frac{b}{r_3^2}\right)^2+2cr_3^2(s_1^2-s_2^2)+c^2r_3^4\quad
\text{(A.\,V.~Tsiganov \cite{bib2})}.}
\end{equation}
Each of them is the additional integral of (\ref{en1_1}) on the symplectic manifold $\mP^4$. According to the Liouville--Arnold theorem any regular level of the first integrals
\begin{equation}\label{s1n8}
\ds{\{({\boldsymbol s},{\boldsymbol r})\in
{\mP}^4: H=h,\quad K=k\}}
\end{equation}
is a union of two-dimensional tori bearing quasi-periodic trajectories.
The integral $K$ is not new because of the functional relation on $\mP^4$
\begin{equation*}\label{en1_6}
\ds{K=F+4bH-b^2.}
\end{equation*}

Note that introducing some terms linear in $s$ and
$r$ in the Hamiltonian function also leads to the integrable system on $\mP^4$. The additional integral in the most general form for this case is pointed out in \cite{bib3}. For such generalization no explicit integration is found yet.

For the system (\ref{en1_1}) and $b=0$ the separation of variables is found by S.A. Chaplygin \cite{bib4} leading to elliptic quadratures. In the work \cite{bib2} using the ideas of bi-Hamiltonian approach some variables of separation are suggested. Nevertheless, the corresponding equations of Abel--Jacobi type given in \cite{bib2} are not written in the explicit form. The dependencies of the phase variables on the proposed separation variables are not found either. Therefore the results of \cite{bib2} are still far from being complete.

In this work we present the explicit separation of variables in the Goryachev case. This solution does not need to involve any far-going mathematical theories and is based on the pure obvious calculation following the works of S.A. Chaplygin. We obtain the simple standard form of the separated equations and express all phase variables via the separation variables $x$ and $y$ in algebraic way.

\section{The separation of variables}
Note that by the appropriate choice of the measurement units and the directions of the moving axes one can always get $c=1$.
Let
\begin{equation}\label{s2n1}
\begin{array}{l}
\ds{u=s_1^2+s_2^2+\frac{b}{r_3^2},\quad
 z=r_3^2.}
 \end{array}
\end{equation}
These variables are introduced similar to the work \cite{bib4} where for the case $b=0$ they lead to the separation. From (\ref{en1_2}), (\ref{en1_3}) and
(\ref{en1_5}) we find the parametric equations of the integral manifold (\ref{s1n8})
\begin{equation}\label{s2n2}
\begin{array}{l}
\ds{s_1^2=\frac{(k-2b)-(u-z)^2}{4z},\quad
s_2^2=\frac{(u+z)^2-(k+2b)}{4z},\quad
s_3^2=\frac{B+\sqrt{B^2-AC^2}}{4A},}\\[3mm]
\ds{2r_1^2=2h-u-2s_3^2+1-z,\quad
2r_2^2=1-z-2h+u+2s_3^2,\quad r_3^2=z.}
 \end{array}
\end{equation}
Here
\begin{equation*}\label{s2n3}
\begin{array}{l}
\ds{A=b^2-2buz+z^2k,}\\[3mm]
\ds{B=-2\,{z}^{2}ku+2\,{z}^{2}kh+2\,{z}^{2}{u}^{2}h+3\,b{u}^{2}z+kuz+kzb-8\,
uzhb-}\\[3mm]
\ds{-{z}^{2}b+{z}^{3}b+{z}^{3}u-2\,{z}^{4}h-kb+{u}^{2}b-{u}^{3}z+4\,{b
}^{2}h-2\,{b}^{2}u,}\\[3mm]
\ds{C=-k+{u}^{2}+{z}^{2}-4\,uzh+kz+4\,bh-2\,bu+{u}^{2}z-{z}^{3}.}
\end{array}
\end{equation*}
For the existing of real solutions it is necessary to simultaneously have
\begin{equation}\label{s2n4}
\begin{array}{l}
\ds{k\geqslant 4bh-b^2,\quad k\geqslant 2b.}
\end{array}
\end{equation}
In view of (\ref{s2n4}), transform the value $B^2-AC^2$:
\begin{equation*}
\begin{array}{l}
\ds{D(u)=B^2-AC^2=z^2(z_1-u)(u-z_2)(u-z_3)(u-z_4)(z_5-u)(u-z_6).}
\end{array}
\end{equation*}
Here the roots $z_k$ of the polynomial $D(u)$
are defined as
\begin{equation*}\label{s2n5}
\begin{array}{l}
\ds{z_{1,2}=z\pm\sqrt{k-2b},\quad
z_{3,4}=-z\pm\sqrt{k+2b},}\\[3mm]
\ds{z_{5,6}=(1-z)(b\pm\sqrt{b^2+k-4bh})+2zh.}
\end{array}
\end{equation*}

 The discriminant set of $D(u)$ is
\begin{equation}\label{s2n55}
k=\pm 2b,\quad k=4hb-b^2, \quad k=(2h\mp
1)^2\pm 2b.
\end{equation}

The accessible region on the $(z,u)$-plane is then bounded by the segments of the straight lines $u = z_k$.
The quadrangle structure of the motion possibility regions usually leads to some exact separation of variables (see
\cite{bib6}, \cite{bib5}).

Consider the expression of the coefficient $A$. By virtue of (\ref{en1_5}) and (\ref{s2n1}) we have
\begin{equation*}
\ds{A=z^2[(s_2-r_3)^2+s_1^2][s_1^2+(s_2+r_3)^2]\geqslant
0}
\end{equation*}
Hence
\begin{equation}\label{s2n6}
\ds{kz^2-2buz+b^2=\xi^2.}
\end{equation}
This equation defines in space ${\Bbb R}^3(u,z,\xi)$ the second order surface which is
\textit{a one-sheet hyperboloid} and therefore has two families of rectilinear generators.
Take the parameters of these families $(x,y)$ for new variables.
The parametric equations of the hyperboloid (\ref{s2n6}) take the form
\begin{equation}\label{s2n7}
z=\frac{2b}{x+y},\quad u=\frac{xy+k}{x+y},\quad
A=\xi^2=\frac{b^2(x-y)^2}{(x+y)^2}.
\end{equation}

Denote the polynomials
\begin{equation}\label{s2n8}
\begin{array}{l}
\ds{w_1=2b+k-x^2,\quad w_2=2b-k+x^2,\quad
w_3=4bh-k-2bx+x^2,}\\[3mm]
\ds{w_4=2b+k-y^2,\quad w_5=-2b+k-y^2,\quad
w_6=-4bh+k+2by-y^2.}
\end{array}
\end{equation}
According to the above notation
\begin{equation}\label{s2n9}
\begin{array}{l}
\ds{B+\xi C=\frac{2bw_1w_2w_6}{(x+y)^4},\quad
B-\xi C=\frac{2bw_3w_4w_5}{(x+y)^4}.}
\end{array}
\end{equation}
We see then that from (\ref{s2n2}) and (\ref{s2n7})--(\ref{s2n9}) we easily obtain the algebraic expression for $s_3$:
\begin{equation*}
\ds{s_3=\frac{\sqrt{w_1w_2w_6}+\sqrt{w_3w_4w_5}}{2\sqrt{b}(x^2-y^2)}.}
\end{equation*}
In a similar way we can simplify the square roots defining other phase variables in (\ref{s2n2}). Finally we come to the following statement.
\begin{theorem} On the common level of the integrals $(\ref{s1n8})$ all phase variables
$({\boldsymbol s}, {\boldsymbol r})$
are algebraically expressed in terms of $x$ and $y$ in the form
\begin{equation}\label{s2n10}
\begin{array}{l}
\ds{s_1=-\frac{\sqrt{w_2w_5}}{2\sqrt{2b}\sqrt{x+y}},\quad
s_2=\frac{\sqrt{w_1w_4}}{2\sqrt{2b}\sqrt{x+y}},\quad
s_3=\frac{\sqrt{w_1w_2w_6}+\sqrt{w_3w_4w_5}}{2\sqrt{b}(x^2-y^2)},}\\[5mm]
\ds{r_1=\frac{\sqrt{w_2w_3w_4}-\sqrt{w_1w_5w_6}}{2\sqrt{b}(x^2-y^2)},\quad
r_2=-\frac{\sqrt{w_2w_4w_6}+\sqrt{w_1w_3w_5}}{2\sqrt{b}(x^2-y^2)},\quad
r_3=\sqrt{\frac{2b}{x+y}}.}
\end{array}
\end{equation}
\end{theorem}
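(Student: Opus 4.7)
The plan is to derive each formula in (\ref{s2n10}) by the same recipe that the author has already used for $s_3$: substitute the hyperboloid parametrization (\ref{s2n7}) into the parametric expressions (\ref{s2n2}), clear the common denominators in $x+y$, and identify the resulting radicand as a product of the polynomials $w_1,\ldots,w_6$ from (\ref{s2n8}). The formula for $r_3$ is immediate from (\ref{s2n7}). For $s_1^2$ and $s_2^2$ I would use $u\pm z = (xy+k\pm 2b)/(x+y)$ to factor
\[
(k-2b)-(u-z)^2=\frac{w_2 w_5}{(x+y)^2},\qquad (u+z)^2-(k+2b)=\frac{w_1 w_4}{(x+y)^2},
\]
so that dividing by $4z=8b/(x+y)$ immediately yields the formulas for $s_1$ and $s_2$ up to an overall sign.

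For $s_3$, I would combine the factorizations (\ref{s2n9}) with the algebraic identity
\[
B+\sqrt{B^2-AC^2}=\tfrac{1}{2}\bigl[\sqrt{B+\xi C}+\sqrt{B-\xi C}\bigr]^2,
\]
valid wherever $B\pm\xi C\geq 0$, and with $4A=4\xi^2 = 4b^2(x-y)^2/(x+y)^2$ to obtain the expression for $s_3$ already displayed by the author. For $r_1$ and $r_2$ the strategy is to make the ansatz
\[
r_1=\frac{\sqrt{w_2 w_3 w_4}-\sqrt{w_1 w_5 w_6}}{2\sqrt{b}\,(x^2-y^2)},\qquad r_2=-\frac{\sqrt{w_2 w_4 w_6}+\sqrt{w_1 w_3 w_5}}{2\sqrt{b}\,(x^2-y^2)},
\]
and verify it against the rational expressions $2r_{1,2}^2 = (1-z)\pm(2h-u)\mp 2s_3^2$ from (\ref{s2n2}). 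The irrational cross terms $\mp 2\sqrt{w_1 w_2 w_3 w_4 w_5 w_6}$ produced by squaring the ansatz exactly cancel the corresponding cross term in $s_3^2=[\sqrt{w_1 w_2 w_6}+\sqrt{w_3 w_4 w_5}]^2/[4b(x^2-y^2)^2]$, so the whole check reduces to two polynomial identities, for instance
\[
2b(x^2-y^2)^2(2h+1-z-u)=w_1 w_2 w_6+w_3 w_4 w_5+w_2 w_3 w_4+w_1 w_5 w_6,
\]
and an analogous identity for $r_2^2$.

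The main obstacle will be these two polynomial identities: after clearing $x+y$ from $u$ and $z$, each becomes a symmetric polynomial identity in $(x,y)$ of total degree six, with coefficients polynomial in $b,h,k$. Their verification is purely mechanical (most conveniently via a computer algebra system) and requires no new idea. Once they are checked, the signs of the individual square roots in (\ref{s2n10}) are pinned down by requiring the right-hand sides to depend continuously on $(x,y)$ along a Liouville trajectory and to satisfy the Casimir relations $\Gamma=1$ and $L=0$.
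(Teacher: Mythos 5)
Your proposal is correct and follows essentially the same route as the paper, which (without spelling out details) obtains (\ref{s2n10}) by substituting the hyperboloid parametrization (\ref{s2n7}) into (\ref{s2n2}) and simplifying the radicands via the factorizations (\ref{s2n8})--(\ref{s2n9}); your identity $B+\sqrt{B^2-AC^2}=\tfrac12\bigl[\sqrt{B+\xi C}+\sqrt{B-\xi C}\bigr]^2$ and the ansatz-plus-polynomial-identity check for $r_1,r_2$ are just an explicit write-up of the ``simplify the square roots'' step the author leaves implicit. No gap here beyond the mechanical verifications you already flag.
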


To complete the separation it is necessary to derive the differential equations for the variables $x$ and $y$.
\begin{theorem}
The derivatives of the introduced variables $x$, $y$
by virtue of the system~$(\ref{en1_1})$ satisfy the equations
\begin{equation}\label{s2n11}
\ds{(x-y)\frac{dx}{dt}=-\frac{1}{\sqrt{b}}\sqrt{-W(x)},\quad
(x-y)\frac{dy}{dt}=\frac{1}{\sqrt{b}}\sqrt{-W(y)},}
\end{equation}
where
\begin{equation*}\label{s2n20}
W(s)=(s^2-k-2b)(s^2-k+2b)(s^2-2bs+4bh-k).
\end{equation*}
\end{theorem}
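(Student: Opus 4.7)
The plan is to compute $\dot z$ and $\dot u$ in two different ways and solve the resulting linear system for $(\dot x,\dot y)$: on one side using the equations of motion (\ref{en1_1}) together with the explicit formulas (\ref{s2n10}), on the other using the rational expressions $z = 2b/(x+y)$ and $u(x+y) = xy+k$ coming from (\ref{s2n7}).

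First I would compute $\dot z = 2r_3(s_2 r_1 - s_1 r_2)$ directly from the last equation of (\ref{en1_1}), and then $\dot u$ by differentiating $u = s_1^2 + s_2^2 + b/r_3^2$ along (\ref{en1_1}). In the latter, I expect the $s_3$-contributions to cancel pairwise (they enter $s_1\dot s_1+s_2\dot s_2$ with opposite signs) and the $b$-terms from $\dot s_1,\dot s_2$ to cancel against those from $\dot r_3$, leaving $\dot u = 2r_3(s_1 r_2+s_2 r_1)$.

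Next I substitute (\ref{s2n10}) into the products $s_2 r_1 \pm s_1 r_2$. Using the consolidation rule $\sqrt{w_a w_b}\sqrt{w_a w_c} = w_a\sqrt{w_b w_c}$ together with the elementary identities $w_1+w_2=4b$, $w_4-w_5=4b$, $w_2-w_1=2(x^2-k)$, $w_4+w_5=2(k-y^2)$ (all immediate from (\ref{s2n8})), the sums should collapse to linear combinations of $\sqrt{w_1 w_2 w_3}$ and $\sqrt{w_4 w_5 w_6}$ with coefficients depending rationally on $x,y$. A factorization check $(x^2-k-2b)=-w_1$, $(x^2-k+2b)=w_2$, $(x^2-2bx+4bh-k)=w_3$, and its $y$-analogue, gives $-W(x)=w_1 w_2 w_3$ and $-W(y)=w_4 w_5 w_6$. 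Multiplying by $2r_3 = 2\sqrt{2b/(x+y)}$ then yields $\dot z$ and $\dot u$ as rational functions of $x,y,\sqrt{-W(x)},\sqrt{-W(y)}$.

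On the other hand, differentiating $z = 2b/(x+y)$ and $u(x+y)=xy+k$ gives the identities $(x+y)^2\dot z = -2b(\dot x+\dot y)$ and $(x+y)^2\dot u = (y^2-k)\dot x + (x^2-k)\dot y$. Equating these with the explicit formulas from the previous step produces a $2\times 2$ linear system for $(\dot x,\dot y)$ with nonzero determinant $x^2-y^2$; solving it should give exactly (\ref{s2n11}). The main obstacle will be the sign bookkeeping in the radical simplifications: one must choose a consistent branch of the square roots in (\ref{s2n10}) on the connected component of real motion so that cross-cancellations like $\sqrt{w_2 w_5}\cdot\sqrt{w_2 w_4 w_6}=w_2\sqrt{w_4 w_5 w_6}$ hold with the correct sign. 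Once that is fixed, the remainder is routine linear algebra.
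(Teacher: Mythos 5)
Your proposal is correct and follows essentially the same route as the paper: compute $\dot u$ and $\dot z$ once from the equations of motion (giving $2r_3(s_2r_1\pm s_1r_2)$) and once by differentiating the parametrization $z=2b/(x+y)$, $u=(xy+k)/(x+y)$, then solve the resulting linear system for $\dot x,\dot y$ and substitute (\ref{s2n10}), using $-W(x)=w_1w_2w_3$, $-W(y)=w_4w_5w_6$. Whether you substitute the radicals before or after solving the $2\times 2$ system is immaterial, and your sign/branch caveat matches the paper's ``simple transformations.''
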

\begin{proof}
On one hand we have
\begin{equation}\label{s2n12}
\begin{array}{l}
\ds{\dot u=\{u,H\}=2r_3(s_2r_1+s_1r_2),\quad
\dot z=\{z,H\}=2r_3(s_2r_1-s_1r_2),}
\end{array}
\end{equation}
and on the other hand we obtain
\begin{equation}\label{s2n13}
\begin{array}{l}
\ds{\dot u=\frac{\dot x(y^2-k)+\dot
y(x^2-k)}{(x+y)^2},\quad \dot z=-\frac{2b(\dot
x+\dot y)}{(x+y)^2}.}
\end{array}
\end{equation}
From (\ref{s2n12}) and (\ref{s2n13}) we find
the expressions for $\dot x$ and $\dot y$:
\begin{equation}\label{s2n14}
\begin{array}{l}
\ds{\dot
x=-\frac{r_3(x+y)}{b(x-y)}[(s_2r_1-s_1r_2)(x^2-k)+2b(s_2r_1+s_1r_2)],}\\[5mm]
\ds{\dot
y=\frac{r_3(x+y)}{b(x-y)}[(s_2r_1-s_1r_2)(y^2-k)+2b(s_2r_1+s_1r_2)].}
\end{array}
\end{equation}
Substituting (\ref{s2n10}) into (\ref{s2n14}),
after some simple transformations we come to (\ref{s2n11}).
\end{proof}

The separated equations can also be written as the Abel--Jacobi equations
\begin{equation*}\label{s2n15}
\begin{array}{l}
\ds{\frac{dx}{\sqrt{-W(x)}}+\frac{dy}{\sqrt{-W(y)}}=0,\quad
\frac{xdx}{\sqrt{-W(x)}}+\frac{ydy}{\sqrt{-W(y)}}=-\frac{dt}{\sqrt{b}}.}
\end{array}
\end{equation*}

From algebraic point of view the flow of the Hamiltonian $H$ linearizes on the Jacobian
$J(\Gamma)$ of the hyperelliptic curve of genera $2$ given by the equation
\begin{equation*}
\Gamma: \tau^2=W(s).
\end{equation*}
The discriminant set of $W(s)$, naturally,
coincides with (\ref{s2n55}).

Thus, the formulas (\ref{s2n10}) and
(\ref{s2n11}) give the analytical real separation of variables which, in particular, provides the possibility to completely investigate the phase topology of the system including the description of the Liouville tori bifurcations.

The author is grateful to prof. M. Kharlamov for valuable discussions and advices.

\end{document}